\documentclass[conference]{IEEEtran}

\usepackage{amsmath,amssymb,amsfonts,amsthm,mathtools}
\usepackage{algorithm}
\usepackage{algpseudocode}
\usepackage{array}
\usepackage{cite}
\usepackage{graphicx}
\usepackage[bookmarks=false]{hyperref}
\usepackage{svg}
\usepackage{xcolor}

\algnewcommand\algorithmicinput{\textbf{Input:}}
\algnewcommand\Input{
\item[\algorithmicinput]}
\algnewcommand\algorithmicoutput{\textbf{Output:}}
\algnewcommand\Output{
\item[\algorithmicoutput]}

\usepackage{booktabs}
\usepackage{subcaption}
\usepackage{caption}
\DeclareMathOperator{\diag}{diag}

\newtheorem{proposition}{Proposition}

\hypersetup{
  colorlinks=true,
  linkcolor=blue,
  filecolor=magenta,
  urlcolor=cyan,
  pdftitle={Adaptive c2-Perturbed AFDM Waveform Design for ISAC},
}

\title{\LARGE Adaptive $c_2$-Perturbed AFDM Waveform Design for Integrated Sensing and Communication}
\vspace{-4em}
\author{
  \IEEEauthorblockN{
    Shiqi Cui\IEEEauthorrefmark{1},
    Fan Zhang\IEEEauthorrefmark{2},
    Yuanshuo Gang\IEEEauthorrefmark{1},
    Zeping Sui\IEEEauthorrefmark{3},
    Tianqi Mao\IEEEauthorrefmark{1},
    Zhaocheng Wang\IEEEauthorrefmark{2}
  }
  \IEEEauthorblockA{
    \IEEEauthorrefmark{1}State Key Laboratory of Environment Characteristics and Effects for Near-space, Beijing Institute of Technology
  }
  \IEEEauthorblockA{
    \IEEEauthorrefmark{2}Department of Electronic Engineering, Tsinghua University
  }
  \IEEEauthorblockA{
    \IEEEauthorrefmark{3}School of Computer Science and Electronics Engineering, University of Essex, UK
  }
  \vspace{-3em}
}

\begin{document}

\maketitle

\begin{abstract}
  Affine frequency division multiplexing (AFDM) is a promising waveform for integrated sensing and communication (ISAC) systems owing to its superior performance in time--frequency doubly dispersive channels. However, AFDM still faces a pair of challenges: high PAPR and random data symbols produce imperfect autocorrelation sidelobes. To address these challenges, this paper proposes a real-time data-driven framework that optimizes the pre-chirp parameter $c_2$ to enhance the AFDM-ISAC performance. Specifically, a side-information-free optimization problem is formulated to reduce PAPR and the weighted integrated sidelobe levels of both aperiodic and periodic autocorrelation functions, with complexity comparable to that of the conventional AFDM receiver. Furthermore, an efficient non-monotone line-search spectral projected-gradient algorithm is developed by exploiting closed-form gradients. Simulation results demonstrate that the proposed method achieves a superior sensing vs. communications trade-off and is capable of striking a promoted bit error rate performance in the presence of severe power amplifier nonlinearity.
\end{abstract}
\begin{IEEEkeywords}
  Integrated sensing and communication, affine frequency division multiplexing, peak-to-average power ratio, weighted integrated sidelobe level, waveform optimization.
\end{IEEEkeywords}

\section{Introduction}

As a foundational technology for future wireless networks, integrated sensing and communication (ISAC) necessitates waveforms that simultaneously support reliable communications and accurate sensing~\cite{ISAC_Survey, ISAC_Gang}. ISAC waveform researches have focused on adapting conventional waveforms from existing communication or sensing systems. Specifically, orthogonal frequency division multiplexing (OFDM) is widely regarded as one of the most representative ISAC waveforms~\cite{cross-domain}. However, in high-mobility scenarios, OFDM is highly sensitive to Doppler shifts, yielding significant inter-carrier interference (ICI)~\cite{AFDM}. To this end, affine frequency division multiplexing (AFDM) was proposed with inherent chirp signalling, resulting in resilience under doubly selective channels~\cite{AFDM, ISAC_Chirp_Sui, AFDM_Fan, SM_AFDM_Sui, AFDM_IM_TWC, HWI_AFDM_Sui}.

However, as an emerging waveform, AFDM still faces challenges in practical ISAC deployments. Specifically, the high peak-to-average power ratio (PAPR) significantly degrades its resilience to the nonlinearity of power amplifiers (PAs)~\cite {HWI_AFDM_Sui, DAFT, AFDM_GPS, choi_AFDM_PAPR, PAPR_AFDM_Meng}. Although existing OFDM PAPR reduction techniques are theoretically applicable to AFDM, these methods generally incur performance degradation\cite{choi_AFDM_PAPR}. Therefore, it is necessary to develop AFDM-tailored PAPR reduction algorithms. Moreover, it was illustrated in \cite{CP-OFDM} that OFDM is the optimal communication waveform for ranging, whereas AFDM remains suboptimal. This observation suggests that AFDM's sensing capability requires further improvement.

The authors of \cite{AFDM_IM_TWC} indicated that the AFDM pre-chirp parameter $c_2$ had a negligible effect on communication performance, which implies extra degrees of freedom in more advanced ISAC waveform design. Existing studies leveraged $c_2$ for index modulation and PAPR reduction~\cite{AFDM_IM_TWC, Radar_centric_AFDM, AFDM_GPS, choi_AFDM_PAPR}. In particular, a radar-centric AFDM waveform featuring chirp-domain index modulation was developed in \cite{Radar_centric_AFDM}, where chirp pairs were selected from a radar-compliant lookup table to maintain desirable ambiguity characteristics while embedding low-rate data. For PAPR reduction, grouped pre-chirp selection (GPS), which searched over a structured set of $c_2$ candidates, was proposed in \cite{AFDM_GPS}, whereas a per-slot efficient AFDM design with low PAPR was further developed in \cite{choi_AFDM_PAPR}. However, the above methods rely on discrete codebooks shared between the transmitter and the receiver, which either require side information or entail higher receiver complexity. Moreover, since the discrete codebooks comprise only a finite set of values, the PAPR solutions and the ambiguity functions of the above algorithms are suboptimal. These approaches predominantly focused on a single objective, thereby failing to achieve the joint optima of ISAC performance.

To address these limitations, we treat the parameter $c_2$ as a flexible variable and propose a continuous optimization framework based on subcarrier-wise perturbations of $c_2$ to jointly optimize the PAPR and the weighted integrated sidelobe level (WISL) of the autocorrelation function (ACF). Specifically, phase-safety constraints are introduced to bound the receiver-side phase distortion, thereby avoiding the need for side information transmission and additional receiver complexity. Moreover, the joint optimization problem is formulated by combining a smooth PAPR surrogate with the WISL metric. To efficiently solve the resulting constrained non-convex problem, we derive closed-form expressions for the gradients with respect to phase-domain variables and develop a non-monotone line-search spectral projected gradient (NMLS-SPG) algorithm. Simulations demonstrate that the proposed framework significantly mitigates PAPR and sidelobes compared to other baselines. Moreover, under PA nonlinearity, the proposed approach achieves superior communication performance over conventional AFDM.

\textit{Notation:} Vectors are denoted by bold lowercase letters, and matrices by bold uppercase letters. $(\cdot)^T$, $(\cdot)^H$, and $(\cdot)^*$ denote transpose, Hermitian transpose, and complex conjugate, respectively. $\mathbb{C}^{M\times N}$ and $\mathbb{R}^{N}$ denote the complex and real spaces. $\mathbf{I}$ and $\mathbf{0}$ denote the identity matrix and the all-zero vector or matrix with compatible dimensions, respectively. For a matrix $\mathbf{A}$ and a vector $\mathbf{a}$, $[\mathbf{A}]_{m,n}$ and $[\mathbf{a}]_m$ denote their $(m,n)$th and $m$th entries, respectively. Moreover, $\diag(\cdot)$ denotes a diagonal matrix, $\odot$ denotes the Hadamard product, $\mathbb{E}\{\cdot\}$ denotes expectation, $\Im\{\cdot\}$ denotes the imaginary part, and $\|\cdot\|_2$ and $\|\cdot\|_{\infty}$ denote the Euclidean and infinity norms.\vspace{-0.1cm}

\section{System Model}
\label{sec:system_model}

\subsection{Conventional AFDM}

Consider a conventional AFDM block whose symbol vector is $\mathbf{x} = [x_0,x_1,\ldots,x_{N-1}]^T \in \mathbb{C}^{N\times1}$. The corresponding transmitted signal in time-domain can be formulated as~\cite{AFDM}
\begin{equation}
  \mathbf{s} = \mathbf{A}^{H}\mathbf{x}
  =
  \boldsymbol{\Lambda}_{c_1}^{H}\mathbf{F}^{H}\boldsymbol{\Lambda}_{c_{2}}^{H}\mathbf{x},
  \label{eq:standard_afdm}
\end{equation}
where $\mathbf{F}$ is the discrete Fourier transform (DFT) matrix with entries $[\mathbf{F}]_{m,n} = \frac{1}{\sqrt{N}}e^{j2\pi mn/N}$, $c_1$ and $c_2$ denote the AFDM post-chirp and pre-chirp parameters, respectively. The chirp matrix is defined as
\begin{equation}
  \boldsymbol{\Lambda}_{c} = \diag\!\left(e^{-j2\pi c n^2}\right)_{n=0}^{N-1},
\end{equation}
where $c\in\{c_1,c_2\}$. Then, a chirp-periodic prefix (CPP) is inserted into the AFDM signal, which is then upconverted to radio frequency and passed through the PA. To account for transmitter PA nonlinearity, the equivalent baseband signal is modeled by a memoryless modified Rapp model~\cite{nokia2016realistic}, yielding
\begin{equation}
  \tilde{s}[n]
  =
  \frac{s[n]}{\left[1+\left(\frac{|s[n]|}{A_{\rm{sat}}}\right)^{2p}\right]^{\frac{1}{2p}}}
  \exp\!\left[j\kappa\frac{|s[n]|^{q}}{1+\left(\frac{|s[n]|}{A_{\phi}}\right)^{q}}\right],
  \label{eq:modified_rapp}
\end{equation}
for $n=0,\ldots,N-1$, where $A_{\rm{sat}}$ is the saturation level, $p$ controls the amplitude modulation to amplitude modulation (AM/AM) smoothness, and $\{\kappa,A_{\phi},q\}$ parameterize amplitude modulation to phase modulation (AM/PM) distortion. The severity of the nonlinearity is quantified by the input back-off (IBO) $10 \log_{10}(A_{\rm sat}^2/P_{\rm avg})$, where $P_{\rm avg}$ is the average input power. Equivalently, for a prescribed IBO, the saturation level is computed as $A_{\rm sat}=\sqrt{P_{\rm avg}10^{\mathrm{IBO}/10}}$. By Bussgang's theorem, the equivalent baseband signal in time-domain $\tilde{\mathbf{s}}=[\tilde{s}[0],\ldots,\tilde{s}[N-1]]^T$ can be expressed as~\cite{Bussgang}
\begin{equation}
  \tilde{\mathbf{s}} = \eta\mathbf{s}+\mathbf{q},
  \label{eq:bussgang_decomp}
\end{equation}
where $\eta=\mathbb{E}\!\left\{\mathbf{s}^{H}\tilde{\mathbf{s}}\right\} / \mathbb{E}\!\left\{\mathbf{s}^{H}\mathbf{s}\right\}$ is the Bussgang gain, and $\mathbf{q}$ represents nonlinear distortion, which has zero mean and satisfies $\mathbb{E}\left\{\mathbf{s}^{H}\mathbf{q}\right\}=0$.

After CPP removal, the doubly dispersive channel can be modeled as an $N \times N$ matrix $\mathbf{H}$. Then the received time-domain signal is
\begin{equation}
  \mathbf{r}
  =
  \mathbf{H}\tilde{\mathbf{s}} + \mathbf{n}
  =
  \eta\mathbf{H}\mathbf{s} + \mathbf{H}\mathbf{q}+\mathbf{n}
  \triangleq
  \eta \mathbf{H}_{\rm{eff}}\mathbf{x} + \tilde{\mathbf{n}},
  \label{eq:time_domain_receive}
\end{equation}
where $\mathbf{H}_{\rm{eff}} = \mathbf{HA}^H$ and $\tilde{\mathbf{n}}\triangleq\mathbf{H}\mathbf{q}+\mathbf{n}$ combines nonlinear distortion and additive white Gaussian noise (AWGN). Upon exploiting the minimum mean square error (MMSE) equalizer $\mathbf{W}=\mathbf{H}_{\rm{eff}}^H(\mathbf{H}_{\rm{eff}}\mathbf{H}_{\rm{eff}}^H+N_0\mathbf{I})^{-1}$, where $N_0$ denotes the AWGN noise power, the estimated symbol vector is given by
\begin{equation}
  \hat{\mathbf{x}} = \eta \mathbf{W}
  \mathbf{H}_{\rm{eff}}\mathbf{x} + \mathbf{W}\tilde{\mathbf{n}}.
  \label{eq:base_equalized_model}
\end{equation}

\subsection{Revised AFDM Scheme}
\label{subsec:AFDM_c2}
Compared with conventional AFDM, the proposed scheme applies subcarrier-wise perturbations to the transmitter $c_2$ parameter, while the receiver remains unchanged. Specifically, the scalar $c_2$ parameter is replaced by
\begin{equation}
  \mathbf{c}_{2,\rm{opt}} = \mathbf{c}_{2} + \Delta \mathbf{c}_2,
  \label{eq:c2_vector}
\end{equation}
where $\mathbf{c}_{2}$ is a constant vector whose entries are all equal to $c_{2}$. The corresponding transmitter-side diagonal matrix is
\begin{equation}
  \boldsymbol{\Lambda}_{c_{2,\rm{opt}}}^{H}
  =
  \diag\!\left(e^{j2\pi (c_{2}+\Delta c_{2,m})m^2}\right)_{m=0}^{N-1}
  =
  \boldsymbol{\Lambda}_{c_{2}}^{H}\boldsymbol{\Phi}^{H},
  \label{eq:lambda_c2_opt}
\end{equation}
where
\begin{equation}
  \boldsymbol{\Phi}^{H}
  =
  \diag\!\left(e^{j2\pi \Delta c_{2,m}m^2}\right)_{m=0}^{N-1}.
  \label{eq:Phi_def}
\end{equation}
Accordingly, the transmitted signal can be rewritten as
\begin{equation}
  \mathbf{s} = \mathbf{A}_{\rm{tx}}^{H}\mathbf{x}
  =
  \boldsymbol{\Lambda}_{c_1}^{H}\mathbf{F}^{H}\boldsymbol{\Lambda}_{c_{2,\rm{opt}}}^{H}\mathbf{x}
  =
  \boldsymbol{\Lambda}_{c_1}^{H}\mathbf{F}^{H}\boldsymbol{\Lambda}_{c_{2}}^{H}\boldsymbol{\Phi}^{H}\mathbf{x},
  \label{eq:tx_signal}
\end{equation}
which shows that the proposed $c_2$-perturbation scheme realizes the design as subcarrier-wise phase rotations on the transmitted waveform. Since the receiver remains unchanged, the same linear equalizer $\mathbf{W}$ as in \eqref{eq:base_equalized_model} is adopted. Substituting \eqref{eq:bussgang_decomp} and \eqref{eq:tx_signal} into the received-signal model yields
\begin{align}
  \hat{\mathbf{x}}
  &=
  \eta \mathbf{W}\mathbf{H}\boldsymbol{\Lambda}_{c_1}^{H}\mathbf{F}^{H}\boldsymbol{\Lambda}_{c_{2}}^{H}\boldsymbol{\Phi}^{H}\mathbf{x} + \mathbf{W}\tilde{\mathbf{n}} \nonumber \\
  &\triangleq \mathbf{T}\boldsymbol{\Phi}^{H}\mathbf{x} + \mathbf{W}\tilde{\mathbf{n}},
  \label{eq:equalized_model}
\end{align}
where $\mathbf{T} \triangleq \eta\mathbf{W}\mathbf{H}\boldsymbol{\Lambda}_{c_1}^{H}\mathbf{F}^{H}\boldsymbol{\Lambda}_{c_{2}}^{H}$. We decompose $\mathbf{T}$ as $\mathbf{T} = \diag(\mathbf{T}) + \Delta \mathbf{T} \triangleq \mathbf{T}^{\star} + \Delta \mathbf{T}$. Then, \eqref{eq:equalized_model} can be rewritten as
\begin{equation}
  \hat{\mathbf{x}} = \mathbf{T}^{\star}\boldsymbol{\Phi}^{H}\mathbf{x} + \Delta\mathbf{T}\boldsymbol{\Phi}^{H}\mathbf{x} + \mathbf{W}\tilde{\mathbf{n}}.
  \label{eq:T_decomposition}
\end{equation}
According to the central limit theorem, both $\Delta\mathbf{T}\boldsymbol{\Phi}^{H}\mathbf{x}$ and $\mathbf{W}\tilde{\mathbf{n}}$ can be approximated as AWGN. Therefore, the proposed $c_2$-perturbation scheme can be interpreted as introducing controlled phase distortion on the constellation.


\section{Joint PAPR--WISL Problem Formulation}
\label{sec:problem_formulation}
In this section, we first formulate the joint optimization problem by introducing a surrogate PAPR objective and a WISL objective under both aperiodic and periodic autocorrelation models. Then, a phase-safety constraint is imposed on the $c_2$ perturbation, leading to a normalized Pareto-weighted joint PAPR--WISL formulation in the phase domain. Finally, we analyze the impact of the $c_2$-perturbation on the communication performance and derive the modulation-aware safe-angle criteria for rectangular QAM constellations.

\subsection{Smoothed PAPR Objective}
Within each signal block $\mathbf{s}$, the PAPR is defined as
\begin{equation}
  J_{\rm{PAPR}}
  =
  \frac{P_{\rm{peak}}}{P_{\rm{avg}}}
  =
  \frac{N\|\mathbf{s}\|_{\infty}^{2}}{\mathbf{s}^{H}\mathbf{s}}.
  \label{eq:papr_exact}
\end{equation}
To obtain a differentiable approximation, we adopt the log-sum-exp surrogate. For any scalars $\{z_n\}_{n=0}^{N-1}$, we have
\begin{equation}
  \max_{0 \le n \le N-1} z_n
  \le
  \log\!\left(\sum_{n=0}^{N-1}e^{z_n}\right)
  \le
  \max_{0 \le n \le N-1} z_n + \log N.
  \label{eq:lse_property}
\end{equation}
Accordingly, the smoothed PAPR objective is expressed as
\begin{equation}
  J_{\rm{PAPR}}^{(\mu)}
  =
  \frac{1}{\mu}\log\!\left(\sum_{n=0}^{N-1}e^{\mu c_0 |s[n]|^2}\right),
  \label{eq:papr_surrogate_alpha}
\end{equation}
where $c_0 = \frac{N}{\|\mathbf{s}\|_2^2}$ and $\mu > 0$ is the smoothing parameter.

\subsection{Unified WISL Formulation}
The aperiodic ACF (AACF) and the periodic ACF (PACF) are considered, which correspond to the zero-padding (ZP) and CPP settings, respectively. Both the AACF and PACF can be expressed in the unified form as~\cite{CP-OFDM}
\begin{equation}
  r_k \triangleq
  \left\{
    \begin{aligned}
      &
      r_k^{(\mathrm{A})}
      =
      \mathbf{s}^{H}\mathbf{J}_k^{(\mathrm{A})}\mathbf{s}
      =
      \big(r_{-k}^{(\mathrm{A})}\big)^{*}, \\
      &
      r_k^{(\mathrm{P})}
      =
      \mathbf{s}^{H}\mathbf{J}_k^{(\mathrm{P})}\mathbf{s}
      =
      \big(r_{N-k}^{(\mathrm{P})}\big)^{*},
    \end{aligned}
    \right.
    \label{eq:acf_unified}
  \end{equation}
  for $k=0,1,\ldots,N-1$, where the shift matrices are defined as
  \begin{equation}
    \mathbf{J}_k^{(\mathrm{A})}
    =
    \begin{bmatrix}
      \mathbf{0} & \mathbf{I}_{N-k} \\
      \mathbf{0} & \mathbf{0}
    \end{bmatrix}, \quad
    \mathbf{J}_k^{(\mathrm{P})}
    =
    \begin{bmatrix}
      \mathbf{0} & \mathbf{I}_{N-k} \\
      \mathbf{I}_k & \mathbf{0}
    \end{bmatrix}.
    \label{eq:Jk_aperiodic}
  \end{equation}
  Consequently, the WISL objective is formulated as
  \begin{equation}
    J_{\rm{WISL}}^{(\nu)}
    =
    \sum_{k=1}^{N-1}\eta_k |\mathbf{s}^{H}\mathbf{J}_k^{(\nu)}\mathbf{s}|^2,
    \label{eq:J_WISL}
  \end{equation}
  where $\nu \in \{\mathrm{A},\mathrm{P}\}$ denotes the aperiodic or periodic case, respectively, and $\eta_k$ denotes the weighting factor.

  \subsection{Phase-Safety Constraints and Joint PAPR--WISL Problem}
  From \eqref{eq:Phi_def}, the perturbation on the $m$-th subcarrier induces the phase distortion
  \begin{equation}
    \theta_m \triangleq 2\pi m^2 \Delta c_{2,m}.
    \label{eq:theta_definition}
  \end{equation}
  To preserve reliable symbol detection, we impose the decision-safe phase constraint $|\theta_m| \le \theta_{\rm{safe}}$, which is equivalent to
  \begin{equation}
    |\Delta c_{2,m}| \le \frac{\theta_{\rm{safe}}}{2\pi m^2},
    \label{eq:funnel_constraint}
  \end{equation}
  for $m \in \{1,\ldots,N-1\}$, while $\theta_0=0$ holds trivially. Let us introduce a Pareto weight $\lambda \in [0,1]$. Since the PAPR and WISL have markedly different numerical scales, we normalize them by their zero-perturbation reference values so that $\lambda$ retains a clear physical interpretation. Specifically, the reference values can be formulated as
  \begin{equation}
    J_{\rm{PAPR}}^{\rm{ref}} \triangleq J_{\rm{PAPR}}^{(\mu)}(\Delta \boldsymbol{c}_2^{(0)}),
    \quad
    J_{\rm{WISL}}^{\rm{ref}} \triangleq J_{\rm{WISL}}^{(\nu)}(\Delta \boldsymbol{c}_2^{(0)}),
  \end{equation}
  where $\Delta \boldsymbol{c}_2^{(0)}=\mathbf{0}$ corresponds to conventional AFDM without perturbation. The normalized objective can be expressed as
  \begin{equation}
    \widetilde{J}(\Delta \boldsymbol{c}_2)
    =
    \lambda \frac{J_{\rm{PAPR}}^{(\mu)}(\Delta \boldsymbol{c}_2)}{J_{\rm{PAPR}}^{\rm{ref}}}
    +
    (1-\lambda)\frac{J_{\rm{WISL}}^{(\nu)}(\Delta \boldsymbol{c}_2)}{J_{\rm{WISL}}^{\rm{ref}}},
    \label{eq:Jmu_norm}
  \end{equation}
  where all the components of the objective are functions with respective to $\Delta \boldsymbol{c}_2$ through the transmitted signal $\mathbf{s}$. Accordingly, the joint optimization problem can be formulated as
  \begin{equation}
    \begin{aligned}
      \min_{\Delta \boldsymbol{c}_2}\quad
      & \widetilde{J}(\Delta \boldsymbol{c}_2) \\
      \text{s.t.}\quad
      & |\Delta c_{2,m}| \le \frac{\theta_{\rm{safe}}}{2\pi m^2},
      \quad m = 1,\ldots,N-1.
    \end{aligned}
    \label{eq:joint_problem_theta}
  \end{equation}
  As $\lambda \rightarrow 1$, the design becomes PAPR-dominant, whereas $\lambda \rightarrow 0$ prioritizes sidelobe suppression. This formulation also captures the implicit trade-off between bit error ratio (BER) and perturbation. Specifically, higher levels of $c_2$ perturbations induce larger phase rotations and may degrade BER performance.


  Since the induced distortion can be interpreted as a common phase rotation, the safety bound $\theta_{\rm{safe}}$ is determined by the geometry of the constellation. Consider a finite constellation $\mathcal{A}=\{A_1,\dots,A_M\}\subset\mathbb{C}$ under minimum-Euclidean-distance detection after the rotation $g=a e^{j\theta}$. Let the threshold $\theta_{\rm th}$ denotes the smallest positive angle which renders a rotated symbol $s_i e^{j\theta}$ reaches the boundary of the corresponding decision region. Given a margin $\gamma>0$, the threshold can be expressed as $\theta_{\rm th}=\theta_{\rm{safe}}+\gamma$.

  \begin{proposition}[Minimum unambiguous phase rotation of rectangular QAM]
    Consider a standard $N_I \times N_Q$ rectangular QAM constellation, where $M=N_I N_Q$ and $N_I \ge N_Q$. The alphabet of this constellation is given by $\mathcal{A}=\{a+j b:\; a\in\{\pm1,\pm3,\ldots,\pm(N_I-1)\},\; b\in\{\pm1,\pm3,\ldots,\pm(N_Q-1)\}\}$.
    The threshold $\theta_{\rm th}$ can be expressed as
    \begin{equation}
      \theta_{\rm th}
      =
      \arctan\!\left(\frac{N_Q-1}{N_I-1}\right) -
      \arcsin\!\left(
        \frac{N_Q-2}{d_{\rm{max}}}
      \right),
      \label{eq:theta_th_rect}
    \end{equation}
    where $d_{\rm{max}} = \sqrt{(N_I-1)^2+(N_Q-1)^2}$. Specifically, for the square QAM case with $N_I=N_Q=\sqrt{M}$, we have
    \begin{equation}
      \theta_{\rm th}^{\rm sq}(M)
      =
      \arccos\!\left(
        \frac{\sqrt{M}-2}{\sqrt{2}\,(\sqrt{M}-1)}
      \right)
      -\frac{\pi}{4}.
    \end{equation}
  \end{proposition}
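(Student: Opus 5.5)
The approach is to reduce everything to the first-quadrant symbols, work out for each one the rotation that carries it onto the nearest decision boundary, and show that the smallest such rotation over the whole constellation is attained by the corner point $(N_I-1)+j(N_Q-1)$ rotating clockwise.

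\textbf{Setup.} Under minimum-distance detection the decision regions of the rectangular QAM are cells. Their boundaries are the lines $x=2k$ for $|2k|\le N_I-2$ and $y=2k$ for $|2k|\le N_Q-2$. Every symbol therefore has a boundary at distance exactly $1$ in each coordinate direction pointing toward the origin. In the outward direction it has such a boundary unless it lies in the outermost column (resp.\ row), where the cell is unbounded. The constellation and its partition are invariant under $x\mapsto -x$, $y\mapsto -y$ and $x\leftrightarrow y$ (up to swapping $N_I,N_Q$), and under these maps the two rotation directions are exchanged. It therefore suffices to take symbols $a+jb$ with $a,b>0$ and treat $\theta>0$ and $\theta<0$ separately.

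\textbf{Displacements under small rotations.} For a symbol $(a,b)$ rotated by angle $\theta\in[0,\pi/2]$, I will write the four coordinate displacements:
\begin{itemize}
\item clockwise, $y$ decreases by $f_1=b(1-\cos\theta)+a\sin\theta$;
\item clockwise, $x$ changes by $f_2=b\sin\theta-a(1-\cos\theta)$;
\item counterclockwise, $x$ decreases by $f_3=a(1-\cos\theta)+b\sin\theta$;
\item counterclockwise, $y$ changes by $f_4=a\sin\theta-b(1-\cos\theta)$.
\end{itemize}
A boundary is reached exactly when one of the relevant displacements equals $1$. The threshold $\theta_{\rm th}$ is then the smallest $\theta$ at which any of these, for any symbol, reaches $1$.

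\textbf{Monotonicity and comparison.} The functions $f_1$ and $f_3$ are increasing in both $a$ and $b$, so their maximum over the constellation is attained at the corner $a=N_I-1$, $b=N_Q-1$. For that symbol a boundary at distance $1$ is genuinely present in both inward directions. Denote the corner values by $F_1$ and $F_3$.
\begin{itemize}
\item Comparing the corner values gives $F_3-F_1=(N_I-N_Q)\big[(1-\cos\theta)-\sin\theta\big]\le 0$ on $[0,\pi/2]$, using $N_I\ge N_Q$ and $1-\cos\theta\le\sin\theta$.
\item For the outward moves, $f_2\le b\sin\theta\le (N_I-1)\sin\theta\le F_1$ and $f_4\le a\sin\theta\le F_1$.
\end{itemize}
Hence $F_1$ dominates every displacement, and it is increasing in $\theta$. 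The threshold is therefore the root of $F_1(\theta)=1$.

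\textbf{Solving for the threshold.} Write the corner as $d_{\max}e^{j\varphi}$ with $\varphi=\arctan\frac{N_Q-1}{N_I-1}$. The condition $F_1(\theta)=1$ becomes $d_{\max}\sin(\varphi-\theta)=N_Q-2$. This gives \eqref{eq:theta_th_rect} directly. One also checks that $\theta_{\rm th}\le\varphi\le\pi/4$, so the restriction $\theta\in[0,\pi/2]$ was legitimate.

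\textbf{Square case.} Set $N_I=N_Q=\sqrt M$. Then $\varphi=\pi/4$ and $d_{\max}=\sqrt2(\sqrt M-1)$, and $\pi/4-\arcsin u=\arccos u-\pi/4$ turns the expression into the stated one.

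\textbf{Expected obstacle.} The main difficulty is bookkeeping rather than analysis. One must make sure the outer-row and outer-column symbols, whose cells are unbounded outward, are handled correctly, i.e.\ that $f_2$ and $f_4$ only matter when a boundary actually exists, and that the symmetry reduction covers all quadrants and both directions. The comparison inequalities themselves are elementary.
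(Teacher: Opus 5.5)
Your proposal is correct and follows the paper's proof. The paper likewise singles out the corner symbol $(N_I-1)-j(N_Q-1)$ under counterclockwise rotation, which is the mirror image of your first-quadrant corner rotated clockwise. It observes that the margin $b\cos\theta-a\sin\theta-(b-1)$ to the inner horizontal boundary decreases in $a$ and $b$, and then solves for its zero; that is exactly your equation $d_{\max}\sin(\varphi-\theta)=N_Q-2$.

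Your additional checks, $F_3\le F_1$ and domination of the outward moves, are where $N_I\ge N_Q$ actually enters, and they make explicit what the paper only asserts. The one case you skip is $f_2<0$ or $f_4<0$, i.e.\ an inward move toward a boundary that always exists. It follows from the same inequality $1-\cos\theta\le\sin\theta$: such an inward displacement is at most $a(1-\cos\theta)$ or $b(1-\cos\theta)$, both of which are bounded by $F_1$.
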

  \begin{proof}
    Given the range of phase distortion of $0<\theta<\pi/2$, the most related symbol is given by $A_c=(N_I-1)-j(N_Q-1)$. This is because the horizontal decision margin of a generic point $(a-jb)$ is given by
    \begin{equation}
      \Delta(a,b;\theta)=b\cos\theta-a\sin\theta-(b-1),
    \end{equation}
    which decreases monotonically with respect to both $a$ and $b$. Accordingly, the solution to $\Delta(N_I-1,N_Q-1;\theta)=0$ is exactly the threshold of \eqref{eq:theta_th_rect} under the case of rectangular QAM. Consequently, the result for square QAM is obtained by setting $N_I=N_Q=\sqrt{M}$.
  \end{proof}

  \section{Proposed NMLS-SPG Algorithm}
  \label{sec:proposed_method}
  In this section, we propose the NMLS-SPG algorithm to solve the normalized phase-safety-constrained joint optimization problem of \eqref{eq:joint_problem_theta}. Specifically, we first derive the analytical gradient of the normalized objective function \eqref{eq:Jmu_norm}, and then derive projected spectral updates relying on a non-monotone line-search strategy. Furthermore, the computational complexity of the NMLS-SPG algorithm under both aperiodic and periodic WISL settings is quantified using fast Fourier transform (FFT)-based structured operators.
  \subsection{Closed-Form Gradient of the Normalized Objective}

  For the sake of notational simplicity, the superscripts $(\mu)$ and $(\nu)$ associated with the two components in \eqref{eq:Jmu_norm} are omitted throughout this subsection, i.e., we set $J_{\rm{PAPR}} \triangleq J_{\rm{PAPR}}^{(\mu)}$ and $J_{\rm{WISL}} \triangleq J_{\rm{WISL}}^{(\nu)}$. Furthermore, by adopting the induced phase vector $\boldsymbol{\theta} \triangleq [\theta_0,\theta_1,\ldots,\theta_{N-1}]^T$, whose elements $\theta_m$ are defined in \eqref{eq:theta_definition}, the objective function can be equivalently transformed into the phase domain. Consequently, the smoothed joint objective function is expressed as
  \begin{equation}
    J(\boldsymbol{\theta}) = \lambda J_{\rm{PAPR}}(\boldsymbol{\theta}) + (1-\lambda) J_{\rm{WISL}}(\boldsymbol{\theta}),
  \end{equation}
  Define the phase vector
  $\boldsymbol{\phi}(\boldsymbol{\theta}) \triangleq [e^{j\theta_0},e^{j\theta_1},\ldots,e^{j\theta_{N-1}}]^T$. Then, from \eqref{eq:tx_signal}, the transmitted signal can be written as $\mathbf{s}(\boldsymbol{\theta}) = \mathbf{A}^{H}\diag(\mathbf{x})\boldsymbol{\phi}(\boldsymbol{\theta}) \triangleq \mathbf{G}\boldsymbol{\phi}(\boldsymbol{\theta})$. Let $\mathbf{g}_n$ denote the $n$-th column of $\mathbf{G}^{H}$, we have $s_n(\boldsymbol{\theta}) = \mathbf{g}_n^{H}\boldsymbol{\phi}(\boldsymbol{\theta})$ and
  $q_n(\boldsymbol{\theta}) \triangleq |s_n(\boldsymbol{\theta})|^2 = \boldsymbol{\phi}(\boldsymbol{\theta})^{H}\mathbf{g}_n\mathbf{g}_n^{H}\boldsymbol{\phi}(\boldsymbol{\theta})$. Hence, the Wirtinger derivative of $q_n(\boldsymbol{\theta})$ with respect to $\boldsymbol{\phi}^{*}$ is given by
  $\nabla_{\boldsymbol{\phi}^{*}} q_n(\boldsymbol{\theta}) = \mathbf{g}_n\mathbf{g}_n^{H}\boldsymbol{\phi}(\boldsymbol{\theta}) = s_n(\boldsymbol{\theta}) \, \mathbf{g}_n$. Then, the gradient of the PAPR term in \eqref{eq:papr_surrogate_alpha} with respect to $\boldsymbol{\phi}^{*}$ can be formulated as
  \begin{align}
    \nabla_{\boldsymbol{\phi}^*} J_{\rm PAPR}(\boldsymbol{\theta})
    &=
    c_0 \sum_{n=0}^{N-1}
    \underbrace{\frac{\exp(\mu c_0 q_n(\boldsymbol{\theta}))}{\sum_\ell \exp(\mu c_0 q_\ell(\boldsymbol{\theta}))}}_{\triangleq \, \alpha_n(\boldsymbol{\theta})}
    s_n(\boldsymbol{\theta}) \, \mathbf{g}_n, \nonumber \\
    &=
    c_0 \mathbf{G}^H \big(\boldsymbol{\alpha}(\boldsymbol{\theta}) \odot \mathbf{s}(\boldsymbol{\theta})\big),
    \label{eq:grad_papr}
  \end{align}
  where $\boldsymbol{\alpha}(\boldsymbol{\theta}) = [\alpha_0(\boldsymbol{\theta}),\ldots,\alpha_{N-1}(\boldsymbol{\theta})]^T$. Similarly, regarding the WISL in \eqref{eq:J_WISL}, by introducing the auxiliary variable $\rho_k(\boldsymbol{\theta}) \triangleq \mathbf{s}(\boldsymbol{\theta})^{H}\mathbf{J}_k\mathbf{s}(\boldsymbol{\theta})$, the gradient of the WISL term can be expressed as
  \begin{align}
    \nabla_{\boldsymbol{\phi}^{*}}& J_{\rm{WISL}}(\boldsymbol{\theta})
    =
    \sum_{k=1}^{N-1} \eta_k \, \nabla_{\boldsymbol{\phi}^{*}} |\rho_k(\boldsymbol{\theta})|^2, \nonumber \\
    &=
    \mathbf{G}^{H}
    \sum_{k=1}^{N-1}
    \eta_k
    \big(\rho_k(\boldsymbol{\theta})^{*}\mathbf{J}_k + \rho_k(\boldsymbol{\theta}) \mathbf{J}_k^{H}\big)\mathbf{s}(\boldsymbol{\theta}).
    \label{eq:grad_wisl}
  \end{align}

  Based on the reference values $J_{\rm{PAPR}}^{\rm{ref}}$ and $J_{\rm{WISL}}^{\rm{ref}}$, the corresponding normalized combined vector is given by
  \begin{equation}
    \begin{aligned}
      \widetilde{\mathbf{v}}(\boldsymbol{\theta}) &
      \triangleq
      \mathbf{G}^{H}\!\Bigg[
        \frac{\lambda c_0}{J_{\rm{PAPR}}^{\rm{ref}}}
        \big(\boldsymbol{\alpha}(\boldsymbol{\theta})\odot\mathbf{s}(\boldsymbol{\theta})\big) \\
        & + \frac{1-\lambda}{J_{\rm{WISL}}^{\rm{ref}}}
        \sum_{k=1}^{N-1}\eta_k \, \big(\rho_k(\boldsymbol{\theta})^{*}\mathbf{J}_k
        + \rho_k(\boldsymbol{\theta})\mathbf{J}_k^{H}\big)\mathbf{s}(\boldsymbol{\theta})
      \Bigg].
    \end{aligned}
    \label{eq:v_combined}
  \end{equation}
  Applying the chain rule in the phase domain, the real-valued gradient of the normalized objective with respect to $\boldsymbol{\theta}$ is explicitly formulated as
  \begin{align}
    \nabla_{\boldsymbol{\theta}}\widetilde{J}(\boldsymbol{\theta})
    &=
    j\,\boldsymbol{\phi}(\boldsymbol{\theta})\odot\widetilde{\mathbf{v}}(\boldsymbol{\theta})^{*}
    -
    j\,\boldsymbol{\phi}(\boldsymbol{\theta})^{*}\odot\widetilde{\mathbf{v}}(\boldsymbol{\theta}) \nonumber \\[3pt]
    &=
    2\,\Im\!\left\{\boldsymbol{\phi}(\boldsymbol{\theta})^{*}\odot\widetilde{\mathbf{v}}(\boldsymbol{\theta})\right\}.
    \label{eq:grad_theta}
  \end{align}

  \addtolength{\topmargin}{0.051in}
  \begin{algorithm}[t]
    \caption{NMLS-SPG for $c_2$-perturbation optimization}
    \label{alg:nmls_spg}
    \footnotesize
    \begin{algorithmic}[1]
      \Input Initial phase $\boldsymbol{\theta}^{(0)}=\mathbf{0}$, WISL mode $\nu$, weight $\lambda$, smoothing parameter $\mu$, weights $\{\eta_k\}$, memory length $M$, line-search factor $\delta$, spectral bounds $(\beta_{\min},\beta_{\max})$, Armijo parameter $c$, phase bound $\theta_{\rm{safe}}$, maximum iterations $T_{\max}$, tolerance $\varepsilon$
      \Output Optimized phase vector $\boldsymbol{\theta}^{\star}$
      \State Compute the reference values $J_{\rm{PAPR}}^{\rm{ref}}=J_{\rm{PAPR}}(\boldsymbol{\theta}^{(0)})$ and $J_{\rm{WISL}}^{\rm{ref}}=J_{\rm{WISL}}(\boldsymbol{\theta}^{(0)})$
      \State Compute $\mathbf{g}^{(0)} = \nabla_{\boldsymbol{\theta}}\widetilde J(\boldsymbol{\theta}^{(0)})$
      \State Set $\beta_0 = 1/\max\{\|\mathbf{g}^{(0)}\|_2,\epsilon_0\}$ and initialize $f_{\rm{hist}}(1)=\widetilde J(\boldsymbol{\theta}^{(0)})$
      \For{$t=0,1,\ldots,T_{\max}-1$}
      \State $\bar{\boldsymbol{\theta}}=\Pi_{\mathbf{\Theta}}\!\left(\boldsymbol{\theta}^{(t)}-\beta_t\mathbf g^{(t)}\right)$
      \State $\mathbf{d}^{(t)} = \bar{\boldsymbol{\theta}} - \boldsymbol{\theta}^{(t)}$
      \If{$(\mathbf{g}^{(t)})^{T}\mathbf{d}^{(t)} \ge 0$}
      \State $\mathbf d^{(t)}=-\mathbf g^{(t)}$; \quad $d^{(t)}_0=0$
      \EndIf
      \State Set $f_{\max}^{(t)} = \max(f_{\rm{hist}})$, $\tau=1$
      \While{\eqref{eq:gll_condition} is not satisfied}
      \State Update $\tau \gets \delta \tau$
      \EndWhile
      \State $\boldsymbol{\theta}^{(t+1)} = \boldsymbol{\theta}^{(t)} + \tau \mathbf{d}^{(t)}$
      \State $\mathbf{g}^{(t+1)} = \nabla_{\boldsymbol{\theta}}\widetilde J(\boldsymbol{\theta}^{(t+1)})$
      \State $\mathbf{u}_t = \boldsymbol{\theta}^{(t+1)} - \boldsymbol{\theta}^{(t)}$, $\mathbf{y}_t = \mathbf{g}^{(t+1)} - \mathbf{g}^{(t)}$
      \If{$\mathbf{u}_t^{T}\mathbf{y}_t > 0$}
      \State $\beta_{t+1}^{\rm{BB}} = \mathbf{u}_t^{T}\mathbf{u}_t / (\mathbf{u}_t^{T}\mathbf{y}_t)$; \\ $\qquad \quad \beta_{t+1}\leftarrow \min\{\beta_{\max},\max\{\beta_{\min},\beta_{t+1}^{\rm{BB}}\}\}$
      \Else
      \State $\beta_{t+1}\leftarrow \min\{\beta_{\max},\max\{\beta_{\min},2\beta_t\}\}$
      \EndIf
      \State $f_{\rm hist}( (t \bmod M) + 1 ) = \widetilde J(\boldsymbol{\theta}^{(t+1)})$
      \If{$t+1\ge M$ and $\max(f_{\rm{hist}})-\min(f_{\rm{hist}})\le \varepsilon$}
      \State \textbf{break}
      \EndIf
      \EndFor
      \State \Return $\boldsymbol{\theta}^{\star} = \boldsymbol{\theta}^{(t+1)}$
    \end{algorithmic}
  \end{algorithm}

  \begin{figure*}[t]
    \centering
    \makebox[\textwidth][c]{%
      \begin{subfigure}[b]{0.34\textwidth}
        \centering
        \includegraphics[width=\textwidth]{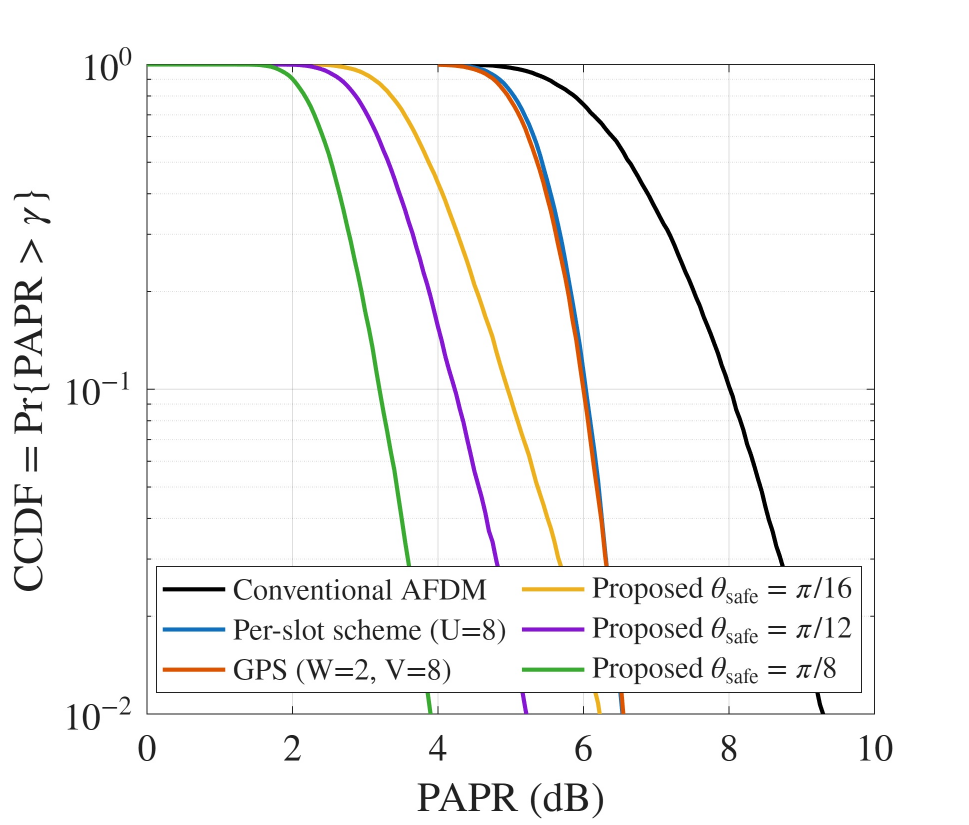}
        \caption{CCDF of PAPR.}
        \label{fig:PAPR}
      \end{subfigure}
      \hspace{-4mm}
      \begin{subfigure}[b]{0.34\textwidth}
        \centering
        \includegraphics[width=\textwidth]{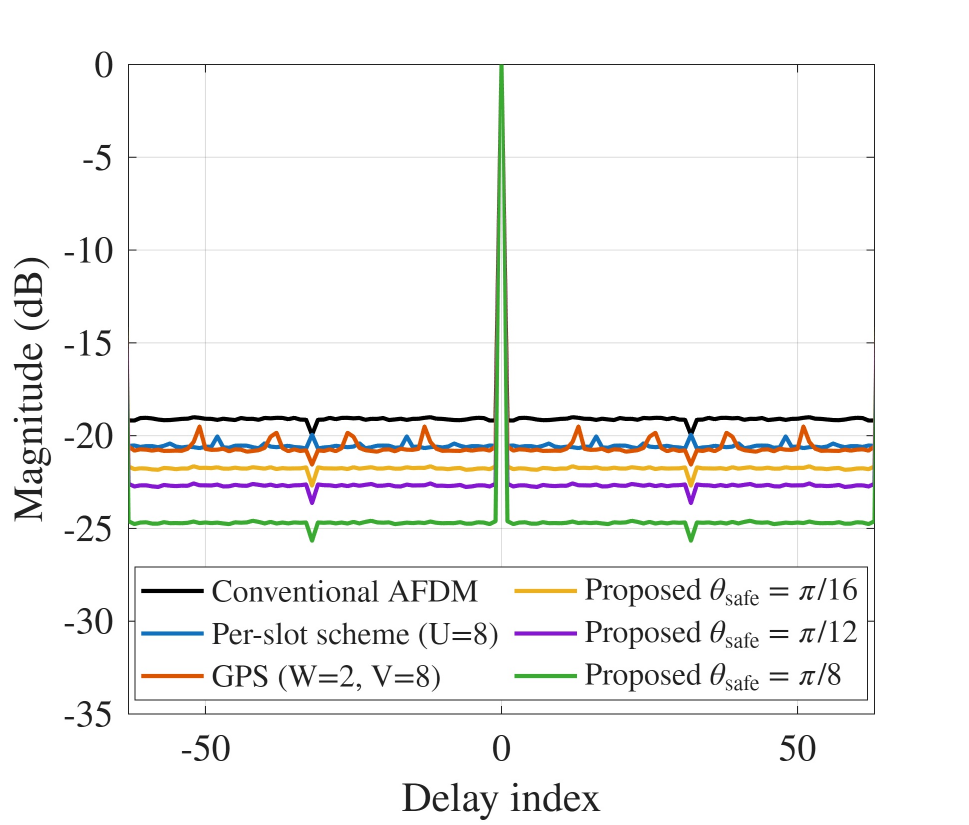}
        \caption{Squared magnitude of normalized PACF.}
        \label{fig:PACF}
      \end{subfigure}
      \hspace{-4mm}
      \begin{subfigure}[b]{0.32\textwidth}
        \centering
        \includegraphics[width=\textwidth]{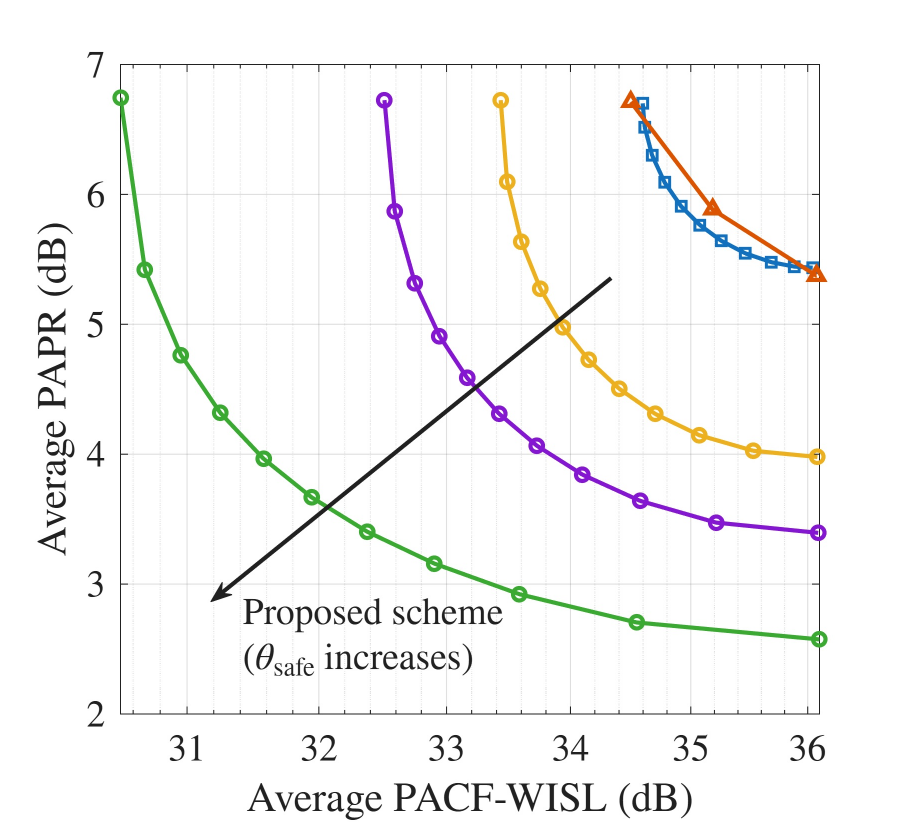}
        \caption{Pareto front.}
        \label{fig:results_Pareto}
      \end{subfigure}
    }
    \caption{Objective-performance results.}
    \label{fig:results_objective}
    \vspace{-5mm}
  \end{figure*}

  \subsection{NMLS-SPG Iterative Updates}
  The problem formulated in \eqref{eq:joint_problem_theta} is a smooth, non-convex optimization problem with simple box constraints, making first-order projected methods appealing because they exploit the closed-form gradient while preserving low-complexity via the FFT. At iteration $t$, define $\mathbf{g}^{(t)} = \nabla_{\boldsymbol{\theta}}\widetilde J(\boldsymbol{\theta}^{(t)})$. For $t\ge 1$, the Barzilai–Borwein (BB) spectral step-size is
  \begin{equation}
    \beta_t^{\rm{BB}}
    =
    \frac{(\Delta \boldsymbol{\theta})^{T}\Delta \boldsymbol{\theta}}
    {(\Delta \boldsymbol{\theta})^{T}\Delta \mathbf{g}},
    \label{eq:bb_step}
  \end{equation}
  where $\Delta \boldsymbol{\theta} = \boldsymbol{\theta}^{(t)}-\boldsymbol{\theta}^{(t-1)}$ and $\Delta \mathbf{g} = \mathbf{g}^{(t)} - \mathbf{g}^{(t-1)}$. To improve numerical robustness, the practical step-size is safeguarded via
  \begin{equation}
    \beta_t
    =
    \min\!\left\{\beta_{\max},\max\!\left\{\beta_{\min},\beta_t^{\rm{BB}}\right\}\right\},
    \label{eq:bb_safe}
  \end{equation}
  where $\beta_{\min}$ and $ \beta_{\max}$ denote the lower and upper safeguards on the spectral step-size. We initialize $\beta_0$ as $\beta_0 = 1 / \max\{\|\mathbf{g}^{(0)}\|_2,\epsilon_0\}$, where $\epsilon_0$ denotes a sufficiently small constant to prevent numerical overflow. Then, the projected search direction is
  \begin{equation}
    \mathbf{d}^{(t)}
    =
    \Pi_{\mathbf{\Theta}}\!\left(\boldsymbol{\theta}^{(t)} - \beta_t \mathbf{g}^{(t)}\right)
    - \boldsymbol{\theta}^{(t)},
    \label{eq:projected_direction}
  \end{equation}
  where $\mathbf{\Theta}$ denotes the feasible phase set
  \begin{equation}
    \mathbf{\Theta}
    =
    \left\{
      \boldsymbol{\theta}\in\mathbb{R}^{N}:
      |\theta_m| \le \theta_{\rm{safe}},\,
      m=1,\ldots,N-1
    \right\},
    \label{eq:feasible_set}
  \end{equation}
  and $\Pi_{\mathbf{\Theta}}(\cdot)$ denotes Euclidean projection onto $\mathbf{\Theta}$, with the following element-wise closed form
  \begin{equation}
    [\Pi_{\mathbf{\Theta}}(\mathbf{z})]_m = \operatorname{clip}(z_m;-\theta_{\rm{safe}},\theta_{\rm{safe}}), \ m=1,\ldots,N-1.
    \label{eq:proj_closed_form}
  \end{equation}
  where $\operatorname{clip}(z;a,b) \triangleq \min\{b,\max\{a,z\}\}$.

  In highly non-convex problems such as the proposed joint PAPR-WISL phase optimization, a strictly monotonic line search severely limits the effectiveness of the BB step. Consequently, to preserve the aggressive spectral step-size within the non-convex setting, we adopt the Grippo-Lampariello-Lucidi (GLL) non-monotone reference value
  \begin{equation}
    f_{\max}^{(t)}
    =
    \max_{0 \le j \le \min(t,M-1)}
    \widetilde J\big(\boldsymbol{\theta}^{(t-j)}\big),
    \label{eq:fmax}
  \end{equation}
  with memory length $M \ge 1$. This non-monotone window is implemented by a cyclic history buffer fhist, whose entries store the most recent $\min(t+1, M)$ objective values of $\widetilde J$, so that $f_{\max}^{(t)} = \max\big(f_{\rm{hist}}\big).$ The accepted step length $\tau \in (0,1]$ is chosen to satisfy
  \begin{equation}
    \widetilde J\big(\boldsymbol{\theta}^{(t)} + \tau \mathbf{d}^{(t)}\big)
    \le
    f_{\max}^{(t)} + c\tau (\mathbf{g}^{(t)})^{T}\mathbf{d}^{(t)},
    \label{eq:gll_condition}
  \end{equation}
  where $c \in (0,1)$ is the Armijo parameter. The length-$M$ non-monotone history buffer is then updated cyclically, and the iteration terminates once the objective span within this buffer falls below $\varepsilon$. Our NMLS-SPG is summarized in \textbf{Algorithm}~\ref{alg:nmls_spg}.

  \subsection{Computational Complexity}

  In the proposed algorithm, the computation of $\mathbf{s}=\mathbf{G}\boldsymbol{\phi}$ requires one length-$N$ IFFT, while the computation of $\mathbf{G}^{H}$ requires one length-$N$ FFT. Both PACF and AACF can be efficiently evaluated via the Wiener--Khinchin theorem. For PACF, the circular autocorrelation is given by $\tilde{\mathbf{r}}=\mathbf{F}^{H}\big(|\mathbf{F}\mathbf{s}|^2\big)$. For AACF, zero padding is first applied to obtain $\mathbf{s}_{\rm{zp}}=[\mathbf{s}^{T},\mathbf{0}_{N}^{T}]^{T}\in\mathbb{C}^{2N}$, and the linear autocorrelation is computed as $\mathbf{r}_{\rm{lin}}=\mathbf{F}_{2N}^{H}\!\big(|\mathbf{F}_{2N}\mathbf{s}_{\rm{zp}}|^2\big)$. All remaining operations exhibit linear complexity with the order of $\mathcal{O}(N)$. Let $C_{\rm{FFT}}(L) = L\log_2 L$ denotes the complexity of a length-$L$ FFT or IFFT operation. By defining the autocorrelation transform length as $M=N$ for PACF and $M=2N$ for AACF, the costs of the objective and gradient evaluations can be respectively formulated as\vspace{-0.1cm}
  \begin{align}
    C_{J}(N) &= C_{\rm{FFT}}(N)+2C_{\rm{FFT}}(M)+\mathcal{O}(N), \\
    C_{\nabla}(N) &= 2C_{\rm{FFT}}(N)+4C_{\rm{FFT}}(M)+\mathcal{O}(N).
    \label{eq:complexity_compact}
  \end{align}
  The total complexity can be compactly expressed as\vspace{-0.1cm}
  \begin{align}
    C_{\rm{tot}} &= (2T+S_T)\big[C_{\rm{FFT}}(N)+2C_{\rm{FFT}}(M)\big] + \mathcal{O}(TN) \nonumber \\
    &= \mathcal{O}\!\left(T(1+\bar L)N\log N\right),
    \label{eq:complexity_total_final}
  \end{align}
  where $T$ denotes the number of outer iterations, $L_t\ge 1$ is the number of objective evaluations in the GLL line search at iteration $t$, $S_T\triangleq\sum_{t=1}^{T}L_t$, and $\bar L\triangleq S_T/T$.

  \section{Simulation Results}
  \label{sec:simulation_results}
  In this section, we characterize the performance of the proposed $c_2$-perturbation scheme in terms of waveform peak reduction, sensing sidelobe suppression, and communication reliability. Unless otherwise specified, the evaluations consider an AFDM system with $N=64$ subcarriers, $c_1=\frac{5}{2N}$, $c_2=\sqrt{2}$, and QPSK modulation. For the transmitter PA, we adopt the modified Rapp model presented in \eqref{eq:modified_rapp}. This model is adapted from~\cite{nokia2016realistic} to represent a PA with unit gain, where the parameters are defined as $\{p, \kappa, A_\phi, q\} = \{2, -0.315, 1.137, 4\}$ and $A_{\rm sat}$ is determined based on the IBO values $\mathrm{IBO}\in\{0~\mathrm{dB}, 10~\mathrm{dB}\}$. The considered benchmarks include conventional AFDM, the per-slot scheme, and GPS. To ensure a fair comparison, the benchmarks adopt the configurations provided in \cite{choi_AFDM_PAPR, AFDM_GPS}. For \textbf{Algorithm~\ref{alg:nmls_spg}}, we set $\nu=\mathrm{P}$ to use the PACF-based WISL metric, while $\lambda$ is uniformly distributed over $[0:0.1:1]$ with $\mu=20$, uniform weights $\{\eta_k\}$, $M=10$, and $\delta=0.5$. The spectral step-size is constrained by $\beta_{\min}=10^{-5}$ and $\beta_{\max}=10^{5}$, the Armijo parameter is set to $c=10^{-4}$, the phase-safety threshold is selected from $\{\pi/16,\pi/12,\pi/8\}$, and the stopping parameters are given by $T_{\max}=20$ and $\varepsilon=10^{-5}$.


  As illustrated in Fig.~\ref{fig:PAPR}, the complementary cumulative distribution function (CCDF) of the PAPR of our proposed scheme outperforms all the baselines relative to conventional AFDM. This is because the optimization space of the baseline schemes is constrained by discrete codebooks, whereas the proposed method employs continuous optimization to achieve a superior solution. Moreover, the proposed method attains the highest gain when $\theta_{\rm{safe}}=\pi/8$. Specifically, relative to conventional AFDM, the $\theta_{\rm{safe}}=\pi/8$ curve achieves an approximate $4$--$5$~dB PAPR reduction at a CCDF near $10^{-2}$, while outperforming both reference optimization schemes.
  
  Fig.~\ref{fig:PACF} compares the normalized PACF sidelobes. It can be observed that the PACF magnitude of conventional AFDM stays near $-19$~dB over most nonzero delay indices, whereas the proposed design's sidelobe floor decreases progressively as $\theta_{\rm{safe}}$ increases. When $\theta_{\rm{safe}}=\pi/8$, the average sidelobe level decreases to approximately $-25$~dB, indicating that the proposed solution can mitigate WISL and promote the correlation behavior for sensing.

  As shown in Fig.~\ref{fig:results_Pareto}, the Pareto front of the proposed method outperforms the per-slot and GPS methods, which are represented by the red and dark-blue curves, respectively, indicating a superior trade-off between PACF-WISL and PAPR in the considered performance region. Furthermore, by increasing the value of $\theta_{\rm{safe}}$, the achievable region can be expanded towards lower values of both PACF-WISL and PAPR, confirming that the phase-safety bound directly governs the attainable performance envelope.

  \begin{figure}[t]
    \centering
    \includegraphics[width=0.8\columnwidth]{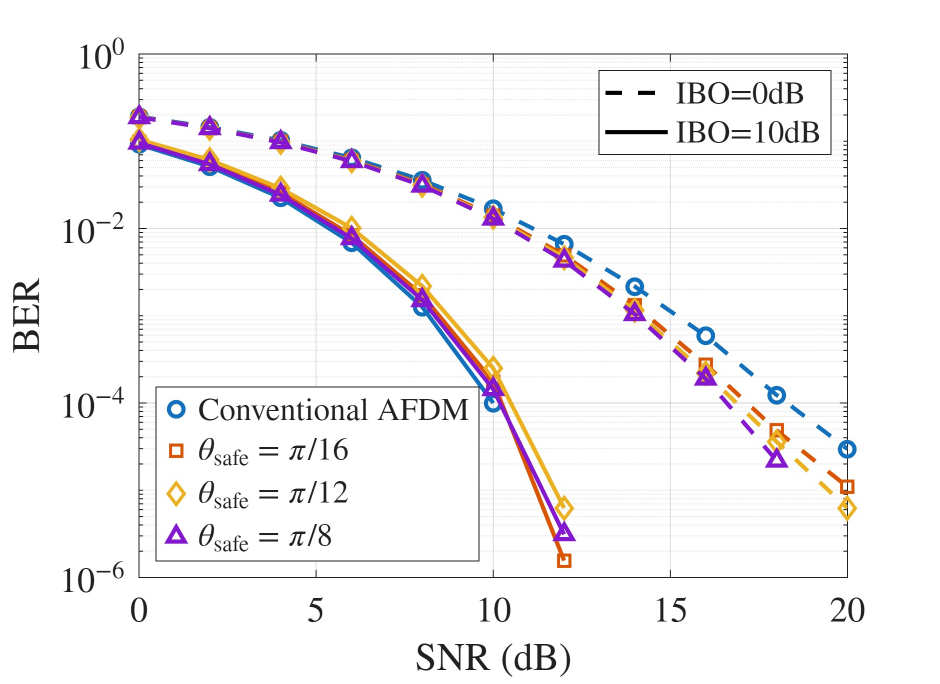}
    \caption{BER performance for BPSK with different $\theta_{\rm{safe}}$.}
    \label{fig:results_BER}
    \vspace{-4mm}
  \end{figure}

  \begin{figure}[t]
    \centering
    \includegraphics[width=0.8\columnwidth]{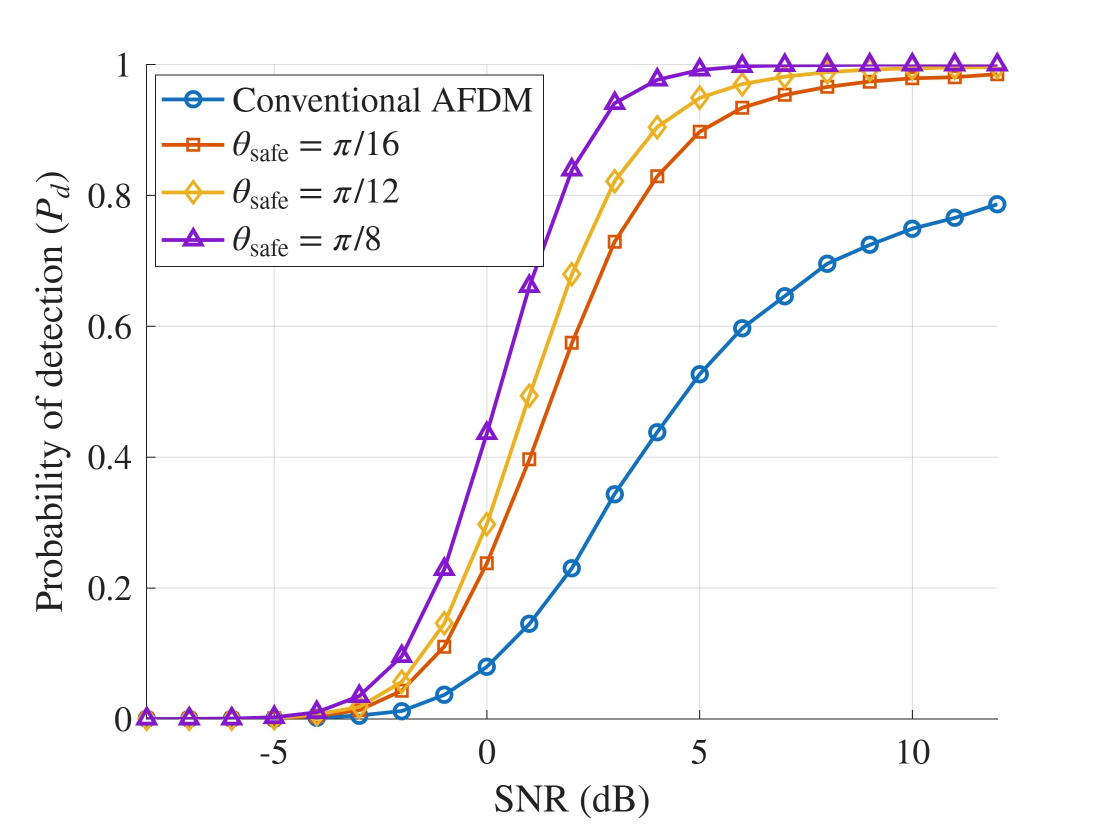}
    \caption{CFAR detection performance.}
    \label{fig:results_cfar}
    \vspace{-6mm}
  \end{figure}


  Fig.~\ref{fig:results_BER} presents the BER performance of $c_2$-perturbed AFDM systems with different values of $\theta_{\rm{safe}}$. The simulations are conducted over a three-path doubly dispersive channel with a carrier frequency of $f_c=4$~GHz and a subcarrier spacing of $\Delta f=15$~kHz. The terminal speed is set as $v=500$~km/h, and we adopt the BPSK constellation. As expected, conventional AFDM achieves better BER performance when $\rm{IBO}=10$~dB. Conversely, when $\rm{IBO}=0$~dB, the proposed method achieves lower BER than conventional AFDM. This is because the nonlinear distortion induced by high PAPR outweighs the phase distortion introduced by the proposed method.

  In Fig.~\ref{fig:results_cfar}, we further characterize the probability of detection $P_d$ versus SNR using a cell-averaging constant false alarm rate detector with the false alarm rate set to $10^{-6}$ for a dual-target sensing scenario, where the delay indices of the two targets are $[10,24]$ and the corresponding normalized radar cross sections are $[1,0.8]$. As shown in Fig.~\ref{fig:results_cfar}, all the proposed $c_2$-perturbation designs outperform conventional AFDM in detection performance, and a higher phase budget leads to a higher $P_d$ given an SNR. In particular, the proposed design with $\theta_{\rm{safe}}=\pi/8$ approaches near-unity detection probability at moderate-to-high SNRs, indicating better sensing robustness under challenging conditions. From Fig.~\ref{fig:results_BER} and Fig.~\ref{fig:results_cfar}, it can be observed that the proposed method enables not only a flexible PAPR vs. WISL trade-off but also a more fundamental communication vs. sensing performance trade-off, with achievable Pareto-efficient operating points.\vspace{-0.1cm}

  \section{Conclusion}
  \label{sec:conclusion}

  This paper presented the design of an ISAC-oriented AFDM waveform by utilizing bounded and subcarrier-wise $c_2$ perturbations at the transmitter. Specifically, we formulated a phase-safety-constrained joint optimization problem for PAPR and WISL, and we derived modulation-aware safe-angle thresholds for rectangular QAM constellations. To address the optimization problem, an NMLS-SPG solver based on analytical gradients is developed. It was demonstrated that our scheme can achieve a controllable trade-off, i.e., the relaxation of the phase budget facilitates PAPR reduction and autocorrelation sidelobe suppression, at the expense of a moderate degradation in BER. Future work will focus on the joint design of the transceiver, including constellation shaping and improved receiver decision algorithms. Furthermore, the integration of deep learning methods will be explored to enhance optimization efficiency and meet the requirements of low-latency communication.

  \bibliographystyle{IEEEtran}
  \bibliography{references}

  \end{document}